	\newtheorem{thm}{Theorem}
\newcommand{\Rmnum}[1]{\expandafter\@slowromancap\romannumeral #1@}
\pgfplotsset{
auto title/.style={title=(\AlphAlph{\pgfplots@group@current@plot})
    }
}
\begin{document}

\title{The mathematics of the genetic code reveal that frequency recurrence leads to nonlinear scaling in the DNA codon distribution of {\it Homo sapiens}}

\author{Bohdan B. Khomtchouk}
\email{b.khomtchouk@med.miami.edu}
\affiliation{University of Miami Miller School of Medicine \\ Center for Therapeutic Innovation and Department of Psychiatry and Behavioral Sciences \\ 1120 NW 14th ST Suite 1463, Miami, FL, USA 33136}

\begin{abstract}
  The nature of the quantitative distribution of the 64 DNA codons in the human genome has been an issue of debate for over a decade. Some groups have proposed that the quantitative distribution of the DNA codons ordered as a rank-frequency plot follows a well-known power law called Zipf's law. Others have shown that the DNA codon distribution is best fitted to an exponential function. However, the reason for such scaling behavior has not yet been addressed. In the present study, we demonstrate that the nonlinearity of the DNA codon distribution is a direct consequence of the frequency recurrence of the codon usage (i.e., the repetitiveness of codon usage frequencies at the whole genome level).  We discover that if frequency recurrence is absent from the human genome, the frequency of occurrence of codons scales linearly with the codon rank. We also show that DNA codons of both low and high frequency of occurrence in the genome are best fitted by an exponential function and provide strong evidence to suggest that the coding region of the human genome does not follow Zipf's law. Information-theoretic methods and entropy calculations are applied to the DNA codon distribution and a new approach, called the lariat method, is proposed to quantitatively analyze the DNA codon distribution in {\it Homo sapiens}.
\end{abstract}

\keywords{Genetic code, information theory, entropy, nonlinearity, exponential scaling, power law scaling, Zipf's law, mathematical genetics, computational genomics}

\maketitle

\section{Introduction}

From the days of its inception, information theory has served to explain how communicative systems function. The mathematician Claude E. Shannon launched the field of information theory with his seminal 1948 paper, {\it A Mathematical Theory of Communication} \cite{Shannon1}, which was later published as a book {\it The Mathematical Theory of Communication} \cite{Shannonbook}. The theory laid the groundwork for the development of data compression and storage methods (e.g., ZIP files, JPEGs, MP3s), channel coding (e.g., DSL), national security measures (e.g., cryptographically secure ciphers), and various other commercial applications (e.g., seismic oil exploration). The applications of information theory have also expanded into various academic fields such as quantum computing, neurobiology, linguistics, and ecology.

The application of information theory to the study of biological phenomena dates back to the 1970s \cite{Glatin, Reichert, Guiasu}, where it was used to obtain quantitative measures such as redundancy and divergence of DNA sequences \cite{Roman}. A resurgence of the application of information-theoretic methods to the study of DNA sequences was prompted by the availability of sequence data in publicly available online databases from the late 1980s to the late 1990s and early 2000s \cite{Altschul:1991aa, Li, Peng, Voss, Grosse2000, Grossesymposium}.  More recently, information theory has reappeared in the genetics and genomics field in the study of alignment-free DNA sequence analysis and comparison, genome entropy estimation, and the identification of allergens in sequenced genomes \cite{Vinga:2013aa, Dang}. The resurgence of the use of information-theoretic methods in genetics and genomics is predicted to infuse promising results into next-generation sequencing projects and gene mapping, metagenomics, and communication theory-based models of information transmission in organisms \cite{Vinga:2013aa}.

In this article, we build a communication theory-based model of the DNA genetic code as a communicative system, where the speaker is modeled as the 64 codons and the receiver is modeled as the 20 amino acids and a stop signal.  In information theory terminology, the codons are the signals and the amino acids and stop signal are the objects.  The 64 signals comprise the entirety of the lexicon and the 21 objects are the targets to be mapped into from this list of signals through the communicative channel, the RNA intermediate.

\section{Background and Methods}

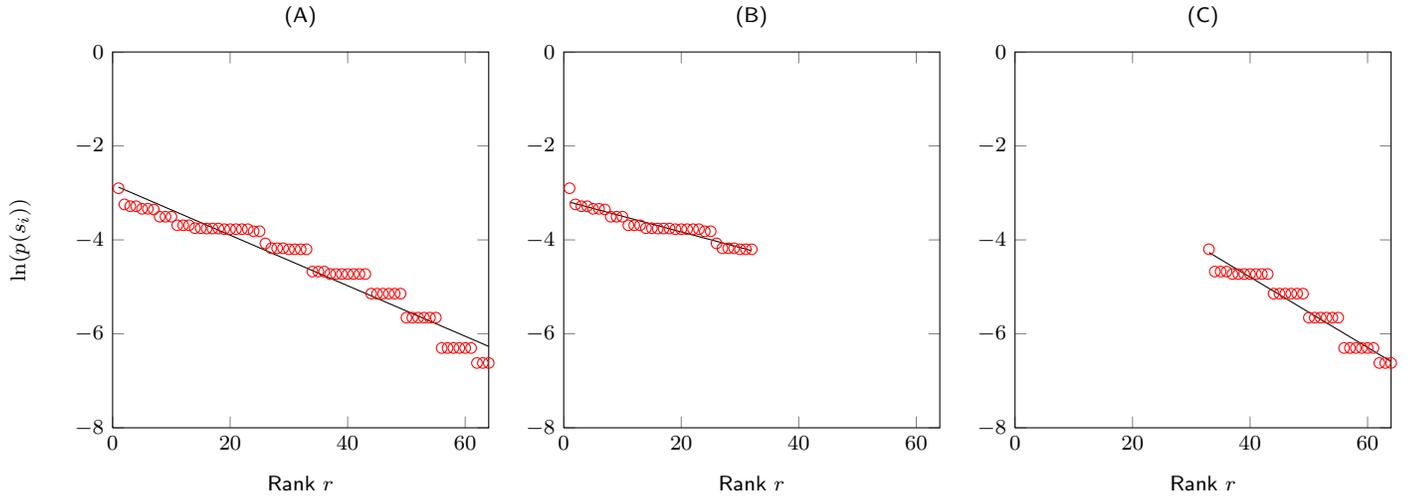
\begin{figure*}
    \begin{tikzpicture}[font=\footnotesize\sffamily]
      \begin{groupplot}[
         group style={group size=3 by 1,
	ylabels at=edge left
	},
          view={0}{90},
          width=5cm,
          height=5cm,
          scale only axis,
          xmin=0, xmax=64,
          ymin=-8, ymax=0,
	xlabel={Rank $r$},
    	ylabel={$\ln(p(s_{i}))$},
          name=plot2,
          unbounded coords=jump]
        ]
        \nextgroupplot [auto title]       
\addplot [only marks, mark=o, color=red] table{
1	-2.900422094
2	-3.244193633
3	-3.283414346
4	-3.283414346
5	-3.338222583
6	-3.338222583
7	-3.352407217
8	-3.506557897
9	-3.506557897
10	-3.506557897
11	-3.688879454
12	-3.688879454
13	-3.688879454
14	-3.750754858
15	-3.750754858
16	-3.757872326
17	-3.757872326
18	-3.757872326
19	-3.772261063
20	-3.772261063
21	-3.772261063
22	-3.772261063
23	-3.772261063
24	-3.816712826
25	-3.816712826
26	-4.074541935
27	-4.177726171
28	-4.177726171
29	-4.177726171
30	-4.199705078
31	-4.199705078
32	-4.199705078
33	-4.199705078
34	-4.674163057
35	-4.674163057
36	-4.674163057
37	-4.726531043
38	-4.726531043
39	-4.726531043
40	-4.726531043
41	-4.726531043
42	-4.726531043
43	-4.726531043
44	-5.144166687
45	-5.144166687
46	-5.144166687
47	-5.144166687
48	-5.144166687
49	-5.144166687
50	-5.65499231
51	-5.65499231
52	-5.65499231
53	-5.65499231
54	-5.65499231
55	-5.65499231
56	-6.301619475
57	-6.301619475
58	-6.301619475
59	-6.301619475
60	-6.301619475
61	-6.301619475
62	-6.620073207
63	-6.620073207
64	-6.620073207
};
\addplot [domain=1:64, color=black] {-0.0538*x-2.8234};
        \nextgroupplot  [auto title]       
\addplot [only marks, mark=o, color=red] table{
1	-2.900422094
2	-3.244193633
3	-3.283414346
4	-3.283414346
5	-3.338222583
6	-3.338222583
7	-3.352407217
8	-3.506557897
9	-3.506557897
10	-3.506557897
11	-3.688879454
12	-3.688879454
13	-3.688879454
14	-3.750754858
15	-3.750754858
16	-3.757872326
17	-3.757872326
18	-3.757872326
19	-3.772261063
20	-3.772261063
21	-3.772261063
22	-3.772261063
23	-3.772261063
24	-3.816712826
25	-3.816712826
26	-4.074541935
27	-4.177726171
28	-4.177726171
29	-4.177726171
30	-4.199705078
31	-4.199705078
32	-4.199705078
};
\addplot [domain=1:32, color=black] {-0.0333*x-3.1637};
        \nextgroupplot  [auto title]       
\addplot [only marks, mark=o, color=red] table{
33	-4.199705078
34	-4.674163057
35	-4.674163057
36	-4.674163057
37	-4.726531043
38	-4.726531043
39	-4.726531043
40	-4.726531043
41	-4.726531043
42	-4.726531043
43	-4.726531043
44	-5.144166687
45	-5.144166687
46	-5.144166687
47	-5.144166687
48	-5.144166687
49	-5.144166687
50	-5.65499231
51	-5.65499231
52	-5.65499231
53	-5.65499231
54	-5.65499231
55	-5.65499231
56	-6.301619475
57	-6.301619475
58	-6.301619475
59	-6.301619475
60	-6.301619475
61	-6.301619475
62	-6.620073207
63	-6.620073207
64	-6.620073207
};
\addplot [domain=33:64, color=black] {-0.0749*x-1.7959};
 \end{groupplot}
    \end{tikzpicture}
\caption{Panel A: Linear regression analysis of exponential fit data shows that linear regression analysis over the entire set (64 codons) of the lexicon of DNA codons demonstrates the superiority of exponential scaling ($R^{2}$=0.9454) in fitting DNA sequences as compared to power law scaling ($R^{2}$=0.6823, Panel A: Fig.\,\ref{fig:linregpower}).  Panel B: Linear regression analysis over a subset of the lexicon size of DNA codons (r=$1,\ldots,32$) shows the superiority of exponential fits at data of low rank ($R^{2}$=0.9074) over power law fits to the respective data ($R^{2}$=0.8707, Panel B: Fig.\,\ref{fig:linregpower}).  Panel C: Linear regression analysis over a subset of the lexicon size of DNA codons (r=$33,\ldots,64$) also shows the superiority of exponential fits at data of high rank ($R^{2}$=0.9403) over power law fits to the respective data ($R^{2}$=0.9114, Panel C: Fig.\,\ref{fig:linregpower}).} 
\label{fig:linregexp}
  \end{figure*}

There has been a considerable range of analytical approaches \cite{Vinga:2013aa, Peng, Peng1994, Borodovsky, Buldyrev, Ossadnik, Viswanathan, Viswanathan1997, Buldyrevbook, Buldyrev1993, Buldyrev1995, Mantegna, Mantegna1995, Audit, Azad, Stanley, Arneodo, Allegrini, Stanleybook, Azbel, Havlin, Bernaola, Bernaola1996, Lu, Herzel, Frappat, Som} performed to investigate the statistical and scale invariant features of long-range correlations in DNA to gain insight into questions such as whether or not the rank-frequency distribution of the codons follows a power law scaling law known as Zipf's law \cite{Zipf}:

\begin{equation} 
f \propto \frac{1}{r^{\alpha}} 
\label{eq:zipf} 
\end{equation}
where $f$ is the frequency, $r$ is the rank, and $\alpha$ is a statistical scaling coefficient that is classically seen to be $\approx 1$ for many sources examined, such as texts \cite{Ferrer2005}, where a text can be a string of DNA nucleotides.

We analyze {\it Homo sapiens} data from the Codon Usage Database and known amino acid residue frequencies sampled from the primary structures of 207 unrelated proteins of known sequence \cite{Nakamura, Klapper}. 

Let $p(s_{i})$ be the probability of use of a specific codon $s_{i}$ ($i$=$1,\ldots,64$), generally for any amino acid or stop signal $r_{j}$ ($j$=$1,\ldots,21$):

\begin{equation}
 \label{eq:test}
 p(s_{i})=\sum_{j} p(s_{i},r_{j}) 
\end{equation}
where $p(s_{i},r_{j})$ is the probability of using $s_{i}$ for $r_{j}$ \cite{Ferrer2003}:

\begin{equation}
 \label{eq:test}
 p(s_{i},r_{j})=p(r_{j})p(s_{i}|r_{j})
\end{equation}
That is, $p(s_{i},r_{j})$ is the probability of $s_{i}$ mapping onto $r_{j}$, $p(r_{j})$ is the probability of use of amino acid $r_{j}$ in protein sequences, and $p(s_{i}|r_{j})$ is the probability of using codon  for amino acid $r_{j}$.  

The frequency of occurrence of a codon can be expressed in terms of a probability:

\begin{equation}
 \label{eq:psi}
 p(s_{i})=\frac{L(s_{i})}{L} 
\end{equation}
where $L(s_{i})$ is the frequency of occurrence of the specific codon $s_{i}$, and $L$ is the total frequency of occurrence of all the 64 codons in the lexicon.  Therefore:

\begin{equation}
 \label{eq:test}
\sum_{i=1}^{64} p(s_{i})=1
\end{equation}

In the linear regression analysis, least squares fitting is applied using a slope no-intercept model to log-transformed data to examine the respective power law and exponential fits to the DNA codon distribution, where the fits are conducted with a one parameter model using least squares.

\section{Results and Discussion}

We graph $p(s_{i})$ versus the rank $r$ and examine the nature of both a power law fit and an exponential fit to the data (Fig.\,\ref{fig:data}).

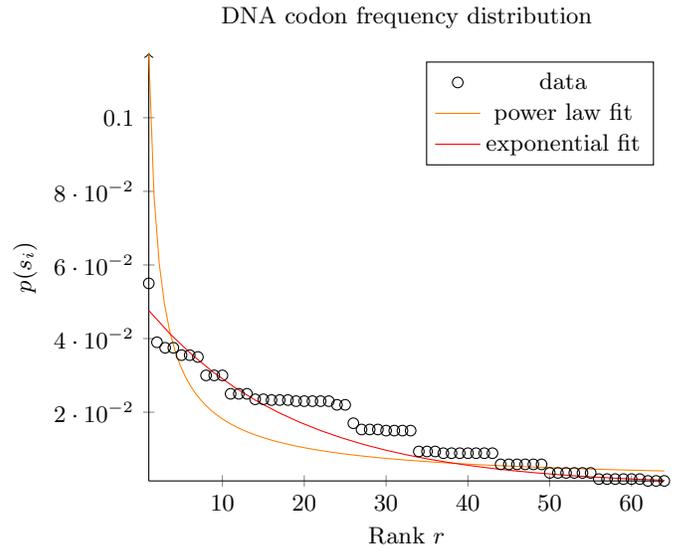
\begin{figure}[htbp]
	\begin{center}
		\begin{tikzpicture}
\begin{axis}[ 
    title={DNA codon frequency distribution},
    axis lines=middle,
    axis line style={->},
    ylabel near ticks,
    xlabel near ticks,
    xlabel={Rank $r$},
    ylabel={$p(s_{i})$}]

\addplot [only marks, mark=o] table{
1	5.50E-02
2	3.90E-02
3	3.75E-02
4	3.75E-02
5	3.55E-02
6	3.55E-02
7	3.50E-02
8	3.00E-02
9	3.00E-02
10	3.00E-02
11	2.50E-02
12	2.50E-02
13	2.50E-02
14	2.35E-02
15	2.35E-02
16	2.33E-02
17	2.33E-02
18	2.33E-02
19	2.30E-02
20	2.30E-02
21	2.30E-02
22	2.30E-02
23	2.30E-02
24	2.20E-02
25	2.20E-02
26	1.70E-02
27	1.53E-02
28	1.53E-02
29	1.53E-02
30	1.50E-02
31	1.50E-02
32	1.50E-02
33	1.50E-02
34	9.33E-03
35	9.33E-03
36	9.33E-03
37	8.86E-03
38	8.86E-03
39	8.86E-03
40	8.86E-03
41	8.86E-03
42	8.86E-03
43	8.86E-03
44	5.83E-03
45	5.83E-03
46	5.83E-03
47	5.83E-03
48	5.83E-03
49	5.83E-03
50	3.50E-03
51	3.50E-03
52	3.50E-03
53	3.50E-03
54	3.50E-03
55	3.50E-03
56	1.83E-03
57	1.83E-03
58	1.83E-03
59	1.83E-03
60	1.83E-03
61	1.83E-03
62	1.33E-03
63	1.33E-03
64	1.33E-03
};

\addplot [domain=1:64, samples=100, color=orange] {0.1176*(x^-0.811)};
\addplot [domain=1:64, color=red] {0.0503*(e^-0.055*x)};
\legend{data,power law fit, exponential fit}

\end{axis} 
\end{tikzpicture}
		\caption{Probability of use of a specific codon {$s_i$} as a function of rank demonstrates frequency of codon usage across the entire distribution spectrum of the lexicon $L$, comprised of the 64 DNA codons.  Power law ($R^{2}$=0.6859) and exponential distributions ($R^{2}$=0.9489) are fitted to the data. Exponential fit: $p(s_{i})=0.0503e^{-0.055r}$ Power law fit: $p(s_{i})=0.1176r^{-0.811}$}
		\label{fig:data}
	\end{center}
\end{figure}

  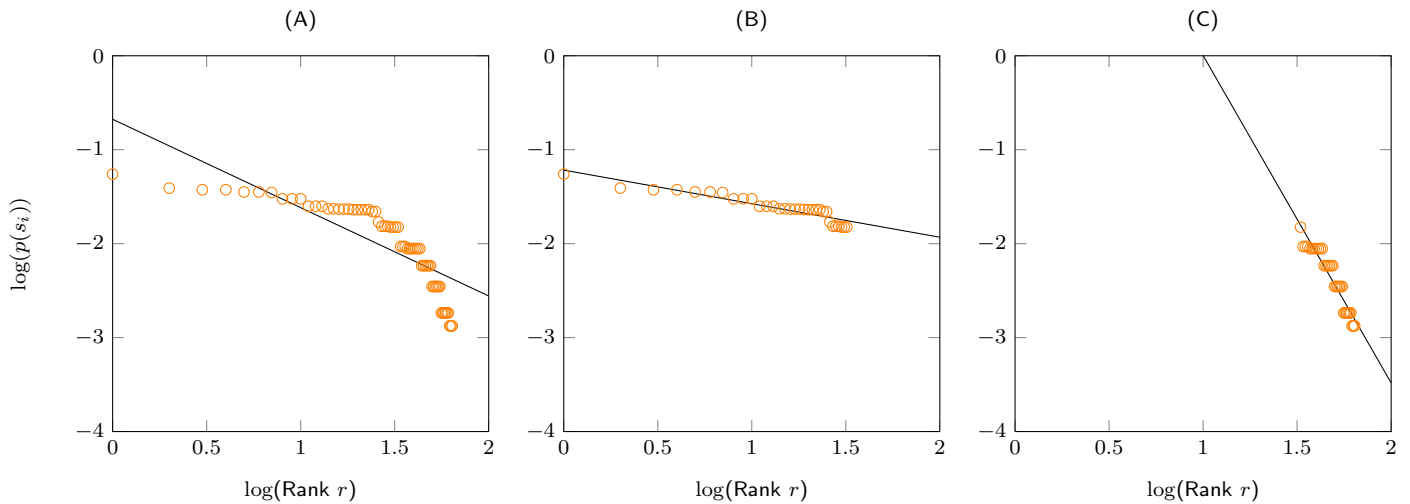
\begin{figure*}
    \begin{tikzpicture}[font=\footnotesize\sffamily]
      \begin{groupplot}[
         group style={group size=3 by 1,
	ylabels at=edge left
	},
          view={0}{90},
          width=5cm,
          height=5cm,
          scale only axis,
          xmin=0, xmax=2,
          ymin=-4, ymax=0,
	xlabel={$\log$(Rank $r$)},
    	ylabel={$\log(p(s_{i}))$},
          name=plot2,
          unbounded coords=jump]
        ]
        \nextgroupplot [auto title]        
\addplot [only marks, mark=o, color=orange] table{
0			-1.259637311
0.301029996	-1.408935393
0.477121255	-1.425968732
0.602059991	-1.425968732
0.698970004	-1.449771647
0.77815125	-1.449771647
0.84509804	-1.455931956
0.903089987	-1.522878745
0.954242509	-1.522878745
1			-1.522878745
1.041392685	-1.602059991
1.079181246	-1.602059991
1.113943352	-1.602059991
1.146128036	-1.628932138
1.176091259	-1.628932138
1.204119983	-1.632023215
1.230448921	-1.632023215
1.255272505	-1.632023215
1.278753601	-1.638272164
1.301029996	-1.638272164
1.322219295	-1.638272164
1.342422681	-1.638272164
1.361727836	-1.638272164
1.380211242	-1.657577319
1.397940009	-1.657577319
1.414973348	-1.769551079
1.431363764	-1.814363423
1.447158031	-1.814363423
1.462397998	-1.814363423
1.477121255	-1.823908741
1.491361694	-1.823908741
1.505149978	-1.823908741
1.51851394	-1.823908741
1.531478917	-2.029963223
1.544068044	-2.029963223
1.556302501	-2.029963223
1.568201724	-2.052706351
1.579783597	-2.052706351
1.591064607	-2.052706351
1.602059991	-2.052706351
1.612783857	-2.052706351
1.62324929	-2.052706351
1.633468456	-2.052706351
1.643452676	-2.234083206
1.653212514	-2.234083206
1.662757832	-2.234083206
1.672097858	-2.234083206
1.681241237	-2.234083206
1.69019608	-2.234083206
1.698970004	-2.455931956
1.707570176	-2.455931956
1.716003344	-2.455931956
1.72427587	-2.455931956
1.73239376	-2.455931956
1.740362689	-2.455931956
1.748188027	-2.736758565
1.755874856	-2.736758565
1.763427994	-2.736758565
1.770852012	-2.736758565
1.77815125	-2.736758565
1.785329835	-2.736758565
1.792391689	-2.875061263
1.799340549	-2.875061263
1.806179974	-2.875061263
};
\addplot [domain=0:2, color=black] {-0.9404*x-0.6761};
        \nextgroupplot [auto title]        
\addplot [only marks, mark=o, color=orange] table{
0			-1.259637311
0.301029996	-1.408935393
0.477121255	-1.425968732
0.602059991	-1.425968732
0.698970004	-1.449771647
0.77815125	-1.449771647
0.84509804	-1.455931956
0.903089987	-1.522878745
0.954242509	-1.522878745
1			-1.522878745
1.041392685	-1.602059991
1.079181246	-1.602059991
1.113943352	-1.602059991
1.146128036	-1.628932138
1.176091259	-1.628932138
1.204119983	-1.632023215
1.230448921	-1.632023215
1.255272505	-1.632023215
1.278753601	-1.638272164
1.301029996	-1.638272164
1.322219295	-1.638272164
1.342422681	-1.638272164
1.361727836	-1.638272164
1.380211242	-1.657577319
1.397940009	-1.657577319
1.414973348	-1.769551079
1.431363764	-1.814363423
1.447158031	-1.814363423
1.462397998	-1.814363423
1.477121255	-1.823908741
1.491361694	-1.823908741
1.505149978	-1.823908741
};
\addplot [domain=0:2, color=black] {-0.3572*x-1.217};
        \nextgroupplot [auto title]        
\addplot [only marks, mark=o, color=orange] table{
1.51851394	-1.823908741
1.531478917	-2.029963223
1.544068044	-2.029963223
1.556302501	-2.029963223
1.568201724	-2.052706351
1.579783597	-2.052706351
1.591064607	-2.052706351
1.602059991	-2.052706351
1.612783857	-2.052706351
1.62324929	-2.052706351
1.633468456	-2.052706351
1.643452676	-2.234083206
1.653212514	-2.234083206
1.662757832	-2.234083206
1.672097858	-2.234083206
1.681241237	-2.234083206
1.69019608	-2.234083206
1.698970004	-2.455931956
1.707570176	-2.455931956
1.716003344	-2.455931956
1.72427587	-2.455931956
1.73239376	-2.455931956
1.740362689	-2.455931956
1.748188027	-2.736758565
1.755874856	-2.736758565
1.763427994	-2.736758565
1.770852012	-2.736758565
1.77815125	-2.736758565
1.785329835	-2.736758565
1.792391689	-2.875061263
1.799340549	-2.875061263
1.806179974	-2.875061263
};
\addplot [domain=0:2, color=black] {-3.485*x+3.4881};
 \end{groupplot}
    \end{tikzpicture}
\caption{Panel A: Linear regression analysis of power law data shows that linear regression analysis over the entire set (64 codons) of the lexicon of DNA codons demonstrates the superiority of exponential scaling ($R^{2}$=0.9454, Panel A: Fig.\,\ref{fig:linregexp}) in fitting DNA sequences as compared to power law scaling ($R^{2}$=0.6823).  Panel B: Linear regression analysis over a subset of the lexicon size of DNA codons (r=$1,\ldots,32$) shows the superiority of exponential fits at data of low rank ($R^{2}$=0.9074, Panel B: Fig.\,\ref{fig:linregexp}) over power law fits to the respective data ($R^{2}$=0.8707).  Panel C: Linear regression analysis over a subset of the lexicon size of DNA codons (r=$33,\ldots,64$) also shows the superiority of exponential fits at data of high rank ($R^{2}$=0.9403, Panel C: Fig.\,\ref{fig:linregexp}) over power law fits to the respective data ($R^{2}$=0.9114).} 
 \label{fig:linregpower}
  \end{figure*}

We discover that the value of the scaling coefficient \eqref{eq:zipf} is $\alpha\approx 0.8$, in significant contrast to Zipf's law where $\alpha\approx 1$ for texts and non-technical natural languages \cite{Ferrer2005, Czirok}, and the power law fit ($R^{2}=0.6859$) is considerably weaker than the exponential fit ($R^{2}=0.9489$).  Therefore, these results strongly suggest that Zipf's law is not applicable to the coding regions of the human genome. The findings support \cite{Som, Frappat, Tsonis} that exponential scaling captures the nature of the DNA codon distribution much more accurately than power law scaling, as evidenced in this analysis by the coefficient of determination, $R^{2}$.  A prior study \cite{Frappat} also demonstrated that the DNA codon frequency distribution as a function of rank {\it r} of {\it Homo sapiens} (Fig.\,\ref{fig:data}), as well as other eukaryotic species, is best modeled as the sum of an exponential function, a linear function, and a constant, not Zipf's law, i.e.:

\begin{equation} 
f(r) = {\alpha}e^{-{\eta}r}-{\beta}r+{\gamma}
\label{eq:zipf} 
\end{equation}

Where $\alpha$, $\eta$, $\beta$, and $\gamma$ are scaling factors that are constant depending on the biological species under analysis.
Furthermore, we extend our analysis to examine high-ranked codons and low-ranked codons with a linear regression approach to determine whether an exponential fit over the respective region of the DNA codon distribution provides a better fit for DNA sequences.  We find that exponential fits (Fig.\,\ref{fig:linregexp}) provide a better representation of DNA codons of lower rank ($r$=$1,\ldots,32$) as well as DNA codons of higher rank ($r$=$33,\ldots,64$) than do the respective power law fits (Fig.\,\ref{fig:linregpower}), as shown by the higher respective $R^{2}$ values.  Hence, we validate that exponential scaling behavior best governs the distribution of the lexicon size of the DNA codons across both low and high ranks. We also conclude from the coefficients of determination obtained in the linear regression analysis that the exponential fit over the entire DNA codon distribution is better equipped to capture the global topology of the frequency distribution than is the power law fit (Fig.\,\ref{fig:linregexp}, Fig.\,\ref{fig:data}, Fig.\,\ref{fig:linregpower}).

Next, we visualize the lexicon as a function of rank (low versus high) to create an intuitive handle for the number of signals (codons) of a certain rank that are present within an entire lexicon (e.g., how many codons of a specific frequency exist in the lexicon, where the lexicon is defined as the 64 DNA codons). This new approach we introduce, which we call the lariat method, poses a natural improvement over existing methodologies to quantify rank based on the frequency when one is interested at examining the distribution of all the different frequencies of the DNA codons at the whole-genome level.  Most prior studies were designed such that DNA codons of recurrent frequency are assigned different rank numbers, where if two codons have the same frequency of occurrence they belong to two different ranks, one following the other sequentially \cite{Som}.  Another study investigating a physical phenomenon in a different academic field wholly unrelated to DNA also assigned sequential ranks to recurrent frequencies \cite{Martinez}. The alternative and physically more meaningful scenario to examine a rank-frequency distribution is to gather the codons into bins of different recurrent frequencies and then assign rank numbers to each bin, where one bin may contain multiple codons.  This procedure partitions the lexicon size of the 64 DNA codons into non-overlapping subsets of signals, or codons, of recurrent frequency:   

\begin{equation}
 \label{eq:test}
 \left\{\\L\right\}=\bigcup_{r} L(r) 
\end{equation}

It has been shown that power-law fits and exponential fits that connect frequency to rank as applied to the codon distribution have been promising sources of fit to DNA sequences \cite{Voss, Som, Tsonis}. Observations that exponential fits have consistently lower $\chi^{2}$ \cite{Som} than power-law fits and, hence, provide a better fit \cite{Tsonis} for DNA sequences have opened interesting new questions about the nature of rank-frequency distributions and the parameters that govern them.  

By employing the lariat method, where codons of recurrent frequency are binned into a single rank value (Fig.\,\ref{fig:lasso}), we discover that the recurrence of the codon usage explains the nonlinearity of the spatial distribution of the DNA codons.  If the recurrence is omitted, as performed in this binning procedure, $p(s_{i})$ scales linearly with the rank $r$.  We verify from the residuals plot (Fig.\,\ref{fig:residuals}) that the linear fit residuals are distributed randomly about zero, signifying that, apart from random uncertainty, the linear model correctly predicts the data. On the contrary, the exponential and particularly the power law fit residuals show systematic, non-random deviation of the data from the respective models.  It should be noted that the number of bins in the lariat method is uniquely determined based upon how many unique codon frequencies exist in the codon pool of the organism under study.  For example, as an extreme hypothetical case of an organism that possesses only three codons, two of which occur at the same frequency, the lariat method would establish two bins: one bin (containing one codon) for the unique codon frequency, and one bin (containing two codons) for the other codon frequency.  As such, the lariat method can be easily extended from studies in humans to other biological organisms.

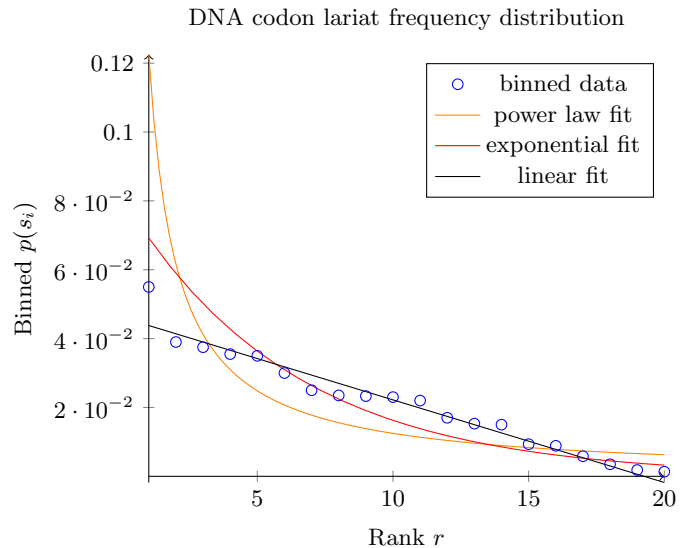
\begin{figure}[htbp]
	\begin{center}
		\begin{tikzpicture}
\begin{axis}[ 
    title={DNA codon lariat frequency distribution},
    axis lines=middle,
    axis line style={->},
    ylabel near ticks,
    xlabel near ticks,
    xlabel={Rank $r$},
    ylabel={Binned $p(s_{i})$}]

\addplot [only marks, mark=o, color=blue] table{
1	5.50E-02
2	3.90E-02
3	3.75E-02
4	3.55E-02
5	3.50E-02
6	3.00E-02
7	2.50E-02
8	2.35E-02
9	2.33E-02
10	2.30E-02
11	2.20E-02
12	1.70E-02
13	1.53E-02
14	1.50E-02
15	9.33E-03
16	8.86E-03
17	5.83E-03
18	3.50E-03
19	1.83E-03
20	1.33E-03
};

\addplot [domain=1:20, samples=100, color=orange] {0.1225*(x^-0.992)};
\addplot [domain=1:20, color=red] {0.0813*(e^-0.161*x)};
\addplot [domain=1:20, color=black] {-0.0024*x+0.0462};
\legend{binned data,power law fit, exponential fit, linear fit}

\end{axis} 
\end{tikzpicture}
		\caption{If the frequency recurrence of codons is omitted, the frequency of occurrence of codons scales linearly with the rank.  Any nonlinear deviations from linear scaling are the result of recurrent codon frequencies.  Power law ($R^{2}$=0.6122), exponential ($R^{2}$=0.8537), and linear fits ($R^{2}$=0.9506) are applied to this binned data. Exponential fit: $p(s_{i})=0.0813e^{-0.161r}$ Linear fit: $p(s_{i})=-0.0024r+0.0462$ Power law fit: $p(s_{i})=0.1225r^{-0.992}$} 
		\label{fig:lasso}
	\end{center}
\end{figure}

The linearity of the data becomes even more pronounced if the first-ranked codon is omitted.  This phenomenon raises interesting questions regarding the biological utility of nonlinear scaling in the genetic code.  The data suggests that the degree of nonlinear, exponential scaling observed in a DNA codon distribution is directly determined by the degree of recurrence of the frequency of the codon usage within the codon lexicon, for any species, not just {\it H. sapiens}.  The evolutionary implications of this are rather interesting, considering that a divergence away from a linear rank-frequency DNA codon distribution allows for less frequent, or less popular, codons (i.e., high rank) in the genome to occur with close to the same frequency as more popular codons (i.e., low rank).  Examining the potential evolutionary pressure that may have driven the genetic code to exhibit this kind of scaling phenomena is a subject of future research. 

\begin{figure}[htbp]
	\begin{center}
		\begin{tikzpicture}
\begin{axis}[ 
    title={Residuals plot of lariat data},
    axis lines=middle,
    axis line style={->},
    legend pos=south east,
    ylabel near ticks,
    xlabel near ticks,
    xlabel={Rank $r$},
    ylabel={Residual}
]

\addplot [only marks, mark=o, color=orange] table{
1	-6.75E-02
2	-2.26E-02
3	-3.69E-03
4	4.53E-03
5	1.02E-02
6	9.29E-03
7	7.23E-03
8	7.93E-03
9	9.48E-03
10	1.05E-02
11	1.06E-02
12	6.59E-03
13	5.71E-03
14	6.06E-03
15	9.88E-04
16	1.03E-03
17	-1.54E-03
18	-3.46E-03
19	-4.77E-03
20	-4.94E-03
};

\addplot [only marks, mark=o, color=red] table{
1	-1.42E-02
2	-1.99E-02
3	-1.27E-02
4	-7.20E-03
5	-1.35E-03
6	-9.43E-04
7	-1.34E-03
8	1.08E-03
9	4.24E-03
10	6.75E-03
11	8.17E-03
12	5.22E-03
13	5.31E-03
14	6.47E-03
15	2.07E-03
16	2.67E-03
17	5.68E-04
18	-9.82E-04
19	-1.98E-03
20	-1.92E-03
};

\addplot [only marks, mark=o, color=black] table{
1	1.12E-02
2	-2.40E-03
3	-1.50E-03
4	-1.10E-03
5	8.00E-04
6	-1.80E-03
7	-4.40E-03
8	-3.50E-03
9	-1.27E-03
10	8.00E-04
11	2.20E-03
12	-4.00E-04
13	3.33E-04
14	2.40E-03
15	-8.67E-04
16	1.06E-03
17	4.33E-04
18	5.00E-04
19	1.23E-03
20	3.13E-03
};

\legend{power law residuals, exponential residuals, linear residuals}

\end{axis} 
\end{tikzpicture}
		\caption{Linear fit residuals are randomly distributed about zero showing no systematic, non-random deviation of the original dataset (Fig.\,\ref{fig:data}) from the linear fit model, in contrast to the exponential and power law models.}
		\label{fig:residuals}	
	\end{center}	
\end{figure}
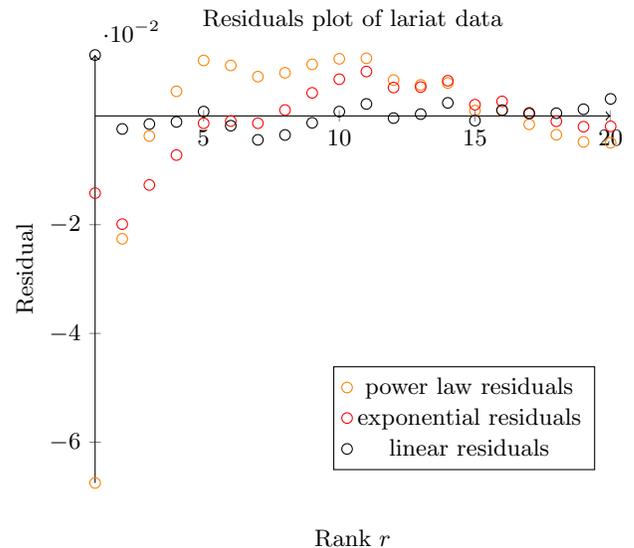

We indeed see from the original dataset (Fig.\,\ref{fig:data}) that the frequency recurrence of the codon usage causes the shape of the codon distribution to assume a nonlinear form, where our linear regression approach has revealed strong linearity in rank-frequency dependency of the DNA codons when their frequency recurrence is controlled for by the lariat method (Fig.\,\ref{fig:lasso}).  As such, a high-ranked codon does not occur with a much lower frequency than a low-ranked codon, as would occur if the data (the DNA codon rank-frequency distribution) was scaled linearly.  Therefore, the exponential scaling seems to serve a certain buffering capacity, the biological significance of which is now an open question.  We subsequently followed up this result with an entropy calculation on the original dataset (Fig.\,\ref{fig:data}) to establish a quantitative measure evaluating the tendency of the 64 codons to roughly evenly distribute across the genome with respect to frequency:

\begin{equation}
 \label{eq:ent}
 H(S)=-\sum_{i=1}^{N} p(s_{i})\log_N(p(s_{i}))=0.926
\end{equation}
In a hypothetical scenario where all the DNA codons are evenly distributed with respect to frequency regardless of the rank (i.e., all the codons occur with the same frequency), the entropy, $H(S)$, of such a system is unity.  A rigorous mathematical proof of this result is demonstrated in Appendix; however, this result can also be arrived to using Lagrange multipliers.  It is a well-known fact that the unique probability distribution having maximum entropy is the uniform distribution; therefore, the Appendix serves merely to demonstrate a new proof of a famous information-theoretic result as applied to the human genome.  As the hypothetical scenario of a uniformly distributed codon frequency is clearly not applicable to the human genome, an entropy value of this magnitude ($\approx 1$) suggests the presence of biological mechanisms that ensure that even though certain codons are more prevalent than others in quantity due to frequency recurrence, the spatial distribution of the DNA codons in the genome behaves as if to mask this effect.

\section{Conclusion}

We provide new information-theoretic analyses which strongly suggest that the coding region of the human genome does not behave according to Zipf's law.  We prove that if the 64 DNA codons of the human genetic code are not equiprobable, then the entropy, $H(S)$, of the genetic code is less than unity.  We also discover that any deviation away from linear rank-frequency DNA codon scaling is a consequence of the recurrence of the frequency of the codon usage.  Hence, we show that the reason for the existence of exponential scaling in the human genome is the repetitiveness in the frequency of usage of different codons.  We show that if frequency recurrence were to be absent from the human genome, the frequency of occurrence of codons would scale linearly with the codon rank, instead of exponentially.  This linear vs. exponential scaling dichotomy creates interesting open questions regarding the potential evolutionary driving force which has led to the rise of the preference of exponential scaling in the DNA genetic code, as opposed to other kinds of scaling (e.g., linear scaling or Zipfian scaling).  We leave the potential evolutionary implications of our findings as a subject of future research to experts in the field.

\section{Acknowledgments}
BBK wishes to acknowledge the support of the Department of Defense (DoD) through the National Defense Science \& Engineering Graduate Fellowship (NDSEG) Program.  BBK wishes to thank Wolfgang Nonner for useful discussions and scientific guidance, and Claes Wahlestedt, Georges St. Laurent \Rmnum{3}, Seth J. Schwartz, and Hemant Ishwaran for critical review and helpful comments on the manuscript.

\newpage
\begin{widetext}
\appendix
\section{Appendix A: Mathematical Proofs} 
\label{App:proof}

\begin{thm}
\label{thm:unity}
If the 64 DNA codons of the human genetic code are equiprobable, then the entropy, $H(S)$, of the genetic code is unity.
\end{thm}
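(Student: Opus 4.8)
The plan is to proceed by direct substitution into the entropy functional defined in Eq.~\eqref{eq:ent}, specializing the general expression $H(S)=-\sum_{i=1}^{N} p(s_{i})\log_N(p(s_{i}))$ to the case $N=64$ together with the equiprobability hypothesis. First I would record that ``equiprobable'' means $p(s_{i})=1/N$ for every $i=1,\ldots,N$, which is consistent with the normalization $\sum_{i=1}^{64}p(s_{i})=1$ established earlier. This is the only structural input the statement provides, so the argument is essentially a one-line computation once the hypothesis is unpacked.

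Next I would substitute $p(s_{i})=1/N$ into the sum, obtaining $H(S)=-\sum_{i=1}^{N}\frac{1}{N}\log_{N}\!\left(\frac{1}{N}\right)$. The key step is the evaluation $\log_{N}(1/N)=\log_{N}(N^{-1})=-1$, which relies only on the defining property of the logarithm in base $N$; this is precisely why the base-$N$ normalization in Eq.~\eqref{eq:ent} was chosen, and it is worth flagging that the result ``unity'' is an artifact of this normalization rather than of the natural or binary logarithm. Then the sum collapses: each of the $N$ terms contributes $-\frac{1}{N}\cdot(-1)=\frac{1}{N}$, so $H(S)=\sum_{i=1}^{N}\frac{1}{N}=N\cdot\frac{1}{N}=1$, and setting $N=64$ gives the claimed value for the genetic code.

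There is no genuine obstacle here: the statement is the trivial direction of the maximum-entropy characterization of the uniform distribution, and the only thing requiring care is bookkeeping of the logarithm base so that the numerical answer comes out exactly $1$ rather than $\ln N$ or $\log_{2}N$. For completeness I would close by remarking that the converse—uniformity is the \emph{unique} maximizer, so any departure from equiprobability forces $H(S)<1$, as asserted in the Conclusion—does not follow from this computation alone and would instead be obtained either via Lagrange multipliers on $-\sum_i p(s_i)\log_N p(s_i)$ subject to $\sum_i p(s_i)=1$, or via Jensen's inequality / the nonnegativity of Kullback–Leibler divergence against the uniform distribution; that strengthening is the substantive part, whereas the theorem as stated is immediate.
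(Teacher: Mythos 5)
Your proposal is correct and follows essentially the same route as the paper: direct substitution of $p(s_i)=1/N$ into Eq.~\eqref{eq:ent} and evaluation of $\log_N(1/N)=-1$ (the paper just writes this step via the change of base $\log_N(p)=\ln(p)/\ln(N)$). No gaps; your closing remark about the converse correctly identifies it as a separate argument, which the paper handles in Theorem~\ref{thm:notunity}.
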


\begin{proof}
Let each of the 64 DNA codons that comprise the human genetic code to occur with the same frequency.  Then it follows from Eq.\eqref{eq:psi} that, for all codons $s_{i}$ in the genetic code, $p(s_{i})$=$1/N$ where $N=64$.  From Eq.\eqref{eq:ent} it follows that:
\begin{equation}
 \label{eq:thm1}
 H(S)=-\sum_{i=1}^{N} p(s_{i})\log_N(p(s_{i}))=-\sum_{i=1}^{N} p(s_{i})\frac{\ln(p(s_{i}))}{\ln(N)}=-\sum_{i=1}^{N} \frac{1}{N}\frac{\ln(1/N)}{\ln(N)}=-\frac{1}{N}\sum_{i=1}^{N} \frac{-\ln(N)}{\ln(N)}=1
\end{equation}    
\end{proof}

\begin{thm}
\label{thm:notunity}
If the 64 DNA codons of the human genetic code are not equiprobable, then the entropy, $H(S)$, of the genetic code is less than unity.
\end{thm}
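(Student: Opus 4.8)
The plan is to establish the stronger fact that the base-$N$ entropy $H(S)$ of Eq.\eqref{eq:ent} satisfies $H(S)\le 1$ for \emph{every} probability distribution on the $N=64$ codons, with equality exactly in the equiprobable case of Theorem \ref{thm:unity}; the present statement is then the remaining half of that dichotomy. Concretely, I would compare the given distribution $p(s_i)$ with the uniform distribution $u_i=1/N$ by means of the elementary inequality $\ln x\le x-1$, valid for all $x>0$ with equality if and only if $x=1$. This is, in disguise, the nonnegativity of the Kullback--Leibler divergence of $p$ from the uniform law (Gibbs' inequality), so an equally valid route is Jensen's inequality applied to the strictly concave function $\ln$; I would present the $\ln x\le x-1$ version since it is self-contained.

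First I would reduce to the case $p(s_i)>0$ for all $i$, which is the situation for the human genome; under the standard convention $0\log 0=0$ (which I would state so that $H(S)$ is defined on the whole simplex), a codon with $p(s_i)=0$ only reinforces the strict inequality and can be treated separately. Next, using $\sum_i p(s_i)=1$ from Eq.\eqref{eq:psi}, I would rewrite $-\sum_{i=1}^N p(s_i)\ln p(s_i)=\sum_{i=1}^N p(s_i)\ln\!\big(1/(Np(s_i))\big)+\ln N$, and then apply $\ln x\le x-1$ with $x=1/(Np(s_i))$ to obtain
\begin{equation}
\sum_{i=1}^N p(s_i)\ln\!\Big(\frac{1}{Np(s_i)}\Big)\le\sum_{i=1}^N p(s_i)\Big(\frac{1}{Np(s_i)}-1\Big)=\sum_{i=1}^N\frac{1}{N}-\sum_{i=1}^N p(s_i)=0 .
\end{equation}
Dividing through by $\ln N>0$ then yields $H(S)\le 1$.

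The decisive step is the strictness. Equality in $\ln x\le x-1$ forces $x=1$, i.e. $p(s_i)=1/N$ for every $i$; since by hypothesis the codons are not equiprobable, at least one summand in the displayed line is strict, and because all the summands being bounded are of a fixed sign, a single strict term makes the whole sum strictly negative, hence $H(S)<1$. I expect the only real obstacle to be bookkeeping rather than mathematics: one must pin down the $0\log 0=0$ convention, dispose of zero-probability codons cleanly, and argue carefully that strictness in one term of the sum propagates to the total. With those points handled, the proof is a two-line application of the logarithm inequality and dovetails with the Lagrange-multiplier derivation alluded to in the main text.
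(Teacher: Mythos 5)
Your proposal is correct, but it takes a genuinely different route from the paper's Appendix. You prove the global bound $H(S)\le 1$ for an \emph{arbitrary} distribution on the $N=64$ codons by applying $\ln x\le x-1$ with $x=1/(Np(s_i))$ (equivalently, nonnegativity of the Kullback--Leibler divergence from the uniform law), and you obtain strictness from the equality case $x=1$. The paper instead argues perturbatively: it starts from the uniform distribution, shifts mass $\epsilon$ so that exactly two probabilities become $1/N\pm\epsilon$, expands $\ln(1/N\pm\epsilon)=\ln(1/N)+\ln(1\pm N\epsilon)$, and reduces the claim to the sign of $(1+N\epsilon)\ln(1+N\epsilon)+(1-N\epsilon)\ln(1-N\epsilon)$. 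Your approach buys two concrete advantages. First, generality: ``not equiprobable'' does not mean ``uniform except for one $\pm\epsilon$ pair,'' so the paper's argument only covers a special case, whereas yours covers every non-uniform distribution, including ones with zero-probability codons once the $0\log 0=0$ convention is fixed. Second, sign hygiene: in the paper's step ``Cancelling the polarities,'' multiplying through by $-1$ should reverse the inequality but does not, and the later step that discards the term $N\epsilon\bigl(\ln(1+N\epsilon)-\ln(1-N\epsilon)\bigr)$ drops a \emph{positive} quantity while trying to establish an upper bound; indeed the quantity the paper claims is negative, $(1+x)\ln(1+x)+(1-x)\ln(1-x)$ with $x=N\epsilon$, is strictly positive on $0<x<1$ --- which is precisely what the theorem needs, but with the opposite sign from what is written. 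Your Gibbs-inequality argument sidesteps all of this. The only things to tighten in your write-up are the points you already flag: state the $0\log 0=0$ convention explicitly, and phrase the strictness step as a termwise comparison ($p(s_i)\ln\bigl(1/(Np(s_i))\bigr)\le 1/N-p(s_i)$ for every $i$, with strict inequality for at least one $i$) rather than as an appeal to the summands having ``a fixed sign.''
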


\begin{proof}
Suppose there exists one DNA codon in the human genetic code that does not occur with the same frequency as the other 63 codons.  Let $\epsilon>0$.  Then for some given $s_{i}$, it follows from Theorem(\ref{thm:unity}) that $p(s_{i})$=$1/N+\epsilon$ and:
\begin{equation}
\label{eq:thm2}
\begin{split}
H(S)=-\sum_{i=1}^{N} p(s_{i})\frac{\ln(p(s_{i}))}{\ln(N)}=-\Biggl(\frac{(1/N + \epsilon)\ln(1/N + \epsilon)}{\ln(N)}+\frac{(1/N - \epsilon)\ln(1/N - \epsilon)}{\ln(N)}+\frac{(1/N)\ln(1/N)}{\ln(N)}+\cdots \\
+\cdots+\frac{(1/N)\ln(1/N)}{\ln(N)}\Biggr)
\end{split}
\end{equation}
where if $p(s_{i})$ changes by $+\epsilon$ for one codon, it is necessarily true that $p(s_{i})$ will change by $-\epsilon$ for some other codon.  

Considering the polarity of the entropy definition, to prove this theorem it must be demonstrated from \eqref{eq:thm2} that:

\begin{equation}
\label{eq:inequality}
H(S)<-\frac{64(1/N)(\ln(1/N))}{\ln(N)}
\end{equation}

To show \eqref{eq:inequality} we proceed directly from \eqref{eq:thm2}:

\begin{equation}
\label{multi}
\begin{split}
H(S)=-\Biggl(&\frac{(1/N + \epsilon)[\ln(1/N) + \ln(1+N\epsilon)]}{\ln(N)}+\frac{(1/N - \epsilon)[\ln(1/N) + \ln(1-N\epsilon)]}{\ln(N)}+\frac{(1/N)\ln(1/N)}{\ln(N)}+\cdots \\ 
&+\cdots+\frac{(1/N)\ln(1/N)}{\ln(N)}\Biggr) \overset{?}<-\frac{64(1/N)(\ln(1/N))}{\ln(N)}
\end{split}
\end{equation}

Subtracting the $p(s_{i})=1/N$ terms from both sides:

\begin{equation}
\label{multi2}
\begin{split}
-\Biggl(&\frac{(1/N + \epsilon)[\ln(1/N) + \ln(1+N\epsilon)]}{\ln(N)}+\frac{(1/N - \epsilon)[\ln(1/N) + \ln(1-N\epsilon)]}{\ln(N)}\Biggr) \overset{?}<-\frac{2(1/N)(\ln(1/N))}{\ln(N)}
\end{split}
\end{equation}

Cancelling the polarities and the $\ln(N)$ terms on both sides of the equation:

\begin{equation}
\label{multi2}
\begin{split}
\Bigl({(1/N + \epsilon)[\ln(1/N) + \ln(1+N\epsilon)]}+{(1/N - \epsilon)[\ln(1/N) + \ln(1-N\epsilon)]}\Bigr) \overset{?}<{2(1/N)(\ln(1/N))}
\end{split}
\end{equation}

Expanding out the equation, combining like terms, and cancelling on both sides leads to:

\begin{equation}
\label{multi3}
\begin{split}
\frac{1}{N}{\ln(1+N\epsilon)}+\epsilon\ln(1+N\epsilon)+\frac{1}{N}{\ln(1-N\epsilon)}-\epsilon\ln(1-N\epsilon) \overset{?}<0
\end{split}
\end{equation}

Regrouping terms leads to:

\begin{equation}
\label{multi4}
\begin{split}
\ln(1+N\epsilon)+\ln(1-N\epsilon)+N\epsilon(\ln(1+N\epsilon)-\ln(1-N\epsilon)) \overset{?}<0
\end{split}
\end{equation}

Since $\epsilon>0$ is arbitrarily small, it follows that:

\begin{equation}
\label{multi5}
\begin{split}
\ln(1+N\epsilon)+\ln(1-N\epsilon) \overset{?}<0
\end{split}
\end{equation}

Employing the series expansion definition of $\ln(1+x)$:

\begin{equation}
\label{lnseries}
\begin{split}
\ln(1+x)=x-\frac{1}{2}x^{2}+\frac{1}{3}x^{3}-\frac{1}{4}x^{4}+\cdots for -1<x<1
\end{split}
\end{equation}

Subsequent substitution and algebra yields:

\begin{equation}
\label{series}
\begin{split}
\ln(1+N\epsilon)+\ln(1-N\epsilon)=-(N\epsilon)^{2}-\frac{1}{2}(N\epsilon)^{4}-\frac{1}{3}(N\epsilon)^{6}-\frac{1}{4}(N\epsilon)^{8}-\cdots-\frac{(N\epsilon)^{2k}}{k}
\end{split}
\end{equation}

Hence it has been proven that for all $n \geq k$:

\begin{equation}
\label{multi5}
\begin{split}
\ln(1+N\epsilon)+\ln(1-N\epsilon)=-\sum_{k=1}^{n}\frac{(N\epsilon)^{2k}}{k}<0 
\end{split}
\end{equation}

Therefore we have proven \eqref{eq:inequality} which proceeds directly from \eqref{eq:thm2}.  This completes the proof.

\end{proof}
\end{widetext}

\bibliographystyle{plain}
\bibliography{Khomtchouk_bib}
\bibliographystyle{unsrt}

\end{document}